\newtheorem{example} {{Example}}
\newtheorem{theorem} {{Theorem}}
\newtheorem{corollary} {{Corollary}}
 \DeclareRobustCommand{\nsbinom}{\genfrac[]\z@{}}
\newcommand{\tabincell}[2]{\begin{tabular}{@{}#1@{}}#2\end{tabular}}
\newcommand{\cP}{{\cal P}}
\newcommand{\cG}{{\cal G}}
\newcommand{\sbinomq}[2]{\nsbinom{{#1}}{{#2}}_{q}}
\newcommand{\deff}{\mbox{$\stackrel{\rm def}{=}$}}
\title{\bf  Construction  of Const Dimension Codes from Serval Parallel Lift MRD Code}
\author{Xianmang He, Yindong Chen}
\begin{document}

\maketitle
\begin{abstract}
In this paper, we generalize the method of using two parallel versions of the lifted MRD code  from the existing work \cite{XuChen}. The Delsarte theorem of the rank distribution of MRD codes is an important part  to count codewords in our construction.   We give a new generalize construction to the following bounds: if $n\ge k\ge d$, then $$A_q(n+k,k,d)\ge q^{n(k-\frac{d}{2}+1)}+\sum_{r=\frac{d}{2}}^{k-\frac{d}{2}} A_r(Q_q(n,k,\frac{d}{2})).$$
On this basis, we also give a construction of  constant-dimension  subspace codes from several parallel versions of lifted MRD codes.
This construction contributes to  a new lower bounds for ${A}_q((s+1)k+n,d,k)$.

{\bf keyword: constant dimension code, subspace code,  maximum rank distance code, lifted maximum rank distance code}
\end{abstract}

\section{Introduction}
Let $F_q$ be the finite field with $q>1$ elements and let $V \cong F_q^n$ be the set of all vectors of length $n$ over $F_q$. $F_q^n$ is a vector
space with dimension $n$ over $F_q$. The \emph{projective space} $\cP_q(n)$, is the set of all subspaces of $F_q^n$, including $\{ {\bf 0} \}$ and $F_q^n$. For a given integer $k$,
$1 \leq k \leq n$, let $\cG_q(n,k)$ denote the set of all
$k$-dimensional subspaces of $F_q^n$. $\cG_q(n,k)$ is often
referred to as Grassmannian. It is well known that
$$ \begin{small}
| \cG_q (n,k) | = \sbinomq{n}{k}
\deff \frac{(q^n-1)(q^{n-1}-1) \cdots
(q^{n-k+1}-1)}{(q^k-1)(q^{k-1}-1) \cdots (q-1)}
\end{small}
$$
where $\sbinomq{n}{k}$ is the $q-$\emph{ary Gaussian
coefficient}.

The set of all subspaces of $V$ forms a metric space associated with the so-called subspace distance $d_S(U,W) = \dim(U+W) - \dim(U \cap W)=2dim(U+W)-2k$.
A $(n,M,d,k)_q$ constant dimension code (CDC) $C$ is a subset of $V$ of cardinality $M$ in which for each pair of elements, the subspace distance is lower bounded by $d$, that is, for all $U \ne W \in C$, we have $d_S(U,W)\ge d$.
The main question of subspace coding in the constant dimension case asks for the maximum cardinality $M$ for fixed parameters $q$, $n$, $d$, and $k$ of a $(n,M,d,k)_q$ code.
Let $A_q(n,d,k)$ denote the maximum size of an $(n,M,d,k)_q$ code.

One way to construct good subspace codes utilizes rank-metric codes.
A linear rank metric code $[k \times n, \varrho, d]_q$ is a subspace $C$ of the vector space of $k \times n$ matrices over $F_q$, i.e., $F_q^{k \times n}$,  for which the distance of two elements is lower bounded via the rank metric $d_r(A,B) = rk(A-B)$, i.e., $d \le d_r(A,B)$ for all $A \ne B \in C$. For all parameters, $0 \le k, n, d$ and $q$ prime power, there is a linear rank metric code that attains the maximum cardinality of
$\left\lceil q^{\max\{k,n\}(\min\{k,n\}-d+1)}\right\rceil$ (see \cite{Gabidulin1985Theory}).  \\

In this paper, we extend the method of using two parallel versions of lifted MRD codes to a  general framework \cite{XuChen}. This method  gives us a new lower bound,
and leads to several new lower bounds from several parallel versions of lifted MRD codes.

\section{Preliminaries}

\subsection{Lifted MRD code}
 Without loss of generality, we assume that $n\ge k$.  For any given MRD code $Q_q(n,k,d)$ with the rank distance $d$, we have a $(n+k, q^{n(k-d+1)}, 2d,n)_q$ CDC consisting of $q^{n(k-d+1)}$ subspaces of dimension $k$  in ${\bf F}_q^{n+k}$  is defined as  $lift(U_A)=\{rowspace[I_k, A]\}$, spanned by rows of $(I_k,A)$. Here $I_k$ is the $k\times k$ identity matrix. For any  $A$ and $B$, the subspaces $U_A$ and $U_B$ spanned by rows of $(I_k,A)$ and $(I_k,B)$ are the same if and only if $A=B$. The intersection $U_A \cap U_B$ is the set $\{ \alpha A: \alpha A=\alpha B, \alpha \in {\bf F}_q^k\}$. Thus $dim(U_A \cap U_B) \leq k-rank(A-B) \leq k-d$. The distance of this CDC is $2d$. A CDC  constructed as above is called a lifted  MRD  code $C^{MRD}$.\\

\subsection{Delsarte Theorem}

For $r \in {\bf Z}^{+}$, the rank distribution of a code $Q$ in $Q_q(m,n,d)$ is defined by $A_r(Q)=|\{Q \in {Q_q(m,n,d)}, rk(Q)=r\}|$  (see \cite{Delsarte1978Bilinear,Cruz2015Rank}).  We refer the following result to Theorem 5.6 in \cite{Delsarte1978Bilinear} or Corollary 26 in \cite{Cruz2015Rank}.  \\

\begin{theorem}\label{them-delsarte}
{\bf(Delsarte 1978)} {\em Assume that $Q \subseteq {Q_q(m,n,d)} $ ($m\ge n$)is a MRD code with rank distance $d$, then its rank distribution is given by $$A_r(Q)=\displaystyle{n \choose r}_q \Sigma_{i=0}^{r-d} (-1)^i q^{\displaystyle{i \choose 2}} \displaystyle{r \choose i}_q (\frac{q^{m(n-d+1)}}{q^{m(n+i-r)}}-1).$$ }
\end{theorem}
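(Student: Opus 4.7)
The plan is to recover the rank distribution by a double-counting argument combined with a $q$-binomial (M\"obius) inversion, in the spirit of the standard Delsarte proof.

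First, for each $t$-dimensional subspace $V\subseteq F_q^n$, let $Q_V=\{c\in Q:V\subseteq\ker c\}$, the set of codewords that vanish on $V$. The first substantive step is to prove $|Q_V|=q^{m(n-d+1-t)}$ for every such $V$ when $t\le n-d+1$, and $|Q_V|=1$ otherwise. The upper bound is obtained by identifying $Q_V$ with a rank-metric code on the quotient $F_q^n/V\cong F_q^{n-t}$: its minimum rank is still at least $d$, so the MRD/Singleton bound yields $|Q_V|\le q^{m(n-t-d+1)}$. The lower bound comes from the restriction map $\phi:Q\to\mathrm{Hom}(V,F_q^m)$, $c\mapsto c|_V$, whose kernel is exactly $Q_V$ and whose image lives in a space of size $q^{mt}$, giving $|Q_V|\ge|Q|/q^{mt}=q^{m(n-d+1-t)}$. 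The two bounds coincide, pinning $|Q_V|$ to a single value that depends only on $\dim V$ and, importantly, not on the specific choice of $V$.

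Second, I would double-count the pairs $(c,V)$ with $c\in Q$, $\dim V=t$, and $V\subseteq\ker c$. Summing over $V$ gives $\binom{n}{t}_q\,q^{m(n-d+1-t)}$; summing over $c$ and counting $t$-subspaces of $\ker c$ gives $\sum_r A_r\binom{n-r}{t}_q$. Setting $s=n-t$ and using the $q$-Pascal identity $\binom{n}{r}_q\binom{n-r}{s-r}_q=\binom{n}{s}_q\binom{s}{r}_q$, the system collapses to
\[
\sum_{r\ge 0}B_r\binom{s}{r}_q=q^{m(s-d+1)},\qquad s\ge d-1,
\]
where $B_r:=A_r/\binom{n}{r}_q$. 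Since $B_0=1$ and $B_r=0$ for $1\le r\le d-1$ by the MRD property, this is equivalent to $\sum_{r\ge d}B_r\binom{s}{r}_q=q^{m(s-d+1)}-1$.

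Third, I would apply the inverse $q$-binomial transform
\[
f(s)=\sum_r g(r)\binom{s}{r}_q\ \Longleftrightarrow\ g(s)=\sum_r(-1)^{s-r}q^{\binom{s-r}{2}}\binom{s}{r}_q f(r),
\]
and substitute $i=t-r$ in the resulting expression for $B_t$. After multiplying by $\binom{n}{t}_q$ and observing that $q^{m(t-i-d+1)}=q^{m(n-d+1)}/q^{m(n+i-t)}$, the answer rearranges into the exact form stated in the theorem.

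The main obstacle I anticipate is the first step: the pointwise equality $|Q_V|=q^{m(n-d+1-t)}$ for every $V$ of dimension $t$. A naive Singleton-style argument only delivers a one-sided inequality, and the elegance of the Delsarte approach lies in combining the shortening bound (upper) with the restriction-kernel bound (lower), so that MRD-ness is invoked exactly twice and forces equality for each individual $V$. Once this is in hand, Steps 2 and 3 are purely formal manipulations with Gaussian binomials.
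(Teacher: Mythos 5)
The paper does not actually prove this statement: it is Delsarte's classical result, imported with a citation to Theorem 5.6 of Delsarte (1978) and Corollary 26 of de la Cruz et al., so there is no internal proof to compare you against. Your plan is the standard elementary derivation (shortening bound, double counting, $q$-binomial inversion), and its architecture is sound. Steps 1 and 2 are correct as stated for linear $Q$; note only that your lower bound $|Q_V|\ge |Q|/q^{mt}$ via the ``kernel'' of the restriction map $\phi$ uses linearity, whereas Delsarte's theorem holds for arbitrary MRD codes --- there one instead observes that each fiber of $\phi$ has size at most $q^{m(n-d+1-t)}$ by your shortening argument, and since $|Q|=q^{mt}\cdot q^{m(n-d+1-t)}$ all $q^{mt}$ fibers must have exactly that size.

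The one step you cannot execute mechanically is the inversion. The identity $\sum_{r\ge d}B_r\binom{s}{r}_q=q^{m(s-d+1)}-1$ holds only for $s\ge d-1$; for $s<d-1$ the left-hand side is $0$ (every $\binom{s}{r}_q$ with $r\ge d>s$ vanishes) while $q^{m(s-d+1)}-1$ is not. The inversion formula you quote is a global equivalence, so the function $f$ you invert must be taken as $f(s)=0$ for $s<d-1$ and $f(s)=q^{m(s-d+1)}-1$ for $s\ge d-1$; the inversion sum for $B_t$ then starts at $s=d-1$, whose term vanishes, which is exactly what produces the truncation at $i=t-d$ in the theorem. A blind application over the full range $0\le s\le t$ with $f(s)=q^{m(s-d+1)}-1$ gives $B_t=q^{m(1-d)}\prod_{j=0}^{t-1}(q^m-q^j)$, which is wrong and generally not an integer: for $m=n=4$, $q=2$, $d=2$, $t=2$ it yields $15\cdot 14/16=13.125$ instead of the correct $B_2=15$, i.e.\ $A_2=525$ as in Example~\ref{exam-delsate}. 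Once you insert the observation that the correctly extended $f$ vanishes below $s=d-1$, the rest of your computation closes and reproduces the stated formula exactly.
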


Here, the value of $r$  varies from  $d$ to  $min\{m,n\}(A_0(Q)=1)$, and the rank distribution of a MRD code is determined by its parameters $m,n,d,q$.

\begin{example}\label{exam-delsate}
 Assume that $m=n=4, q=2, d=2$, we have $|Q_2(4,4,2)|=2^{12}$, and $A_2(Q_2(4,4,2))=525, A_3(Q(4,4,2))=2250,A_4(Q_2(4,4,2))=1320$. Similarly, $m=5, n=5, d=2, q=2$, then  $A_2(Q_2(5,5,2))=4805, A_3(Q_2(5,5,2)))=124930$. Moreover, $A_2(Q_2(6,5,2))=9765$, $A_3(Q_2(6,5,2))=566370$,$A_2(Q_2(7,5,2))=19685,A_3(Q_2(7,5,2))=2401570$.
\end{example}

\subsection{Previous constructions}

There is a variety of lower bounds and upper bounds for constant dimension codes.
 For numerical
values of the known lower and upper bounds on the sizes of subspace codes we refer
the reader to the online tables at http://subspacecodes.uni-bayreuth.de associated
with \cite{Heinlein2016Tables}.

Koetter and Kschischang~\cite{koetter2008coding}, Etzion and Vardy~\cite{tuviztzion2011analogs} developed several upper bounds on $A_q(n,d,k)$.
Surveys of upper bounds for constant dimension codes can be in the paper \cite{Khaleghi2009Subspace}.
One of the Johnson type bound in \cite{Etzion2011Error} (see Theorem 4 in \cite{Etzion2011Error}) of CDC is
\begin{equation}
A_q(n, 2\delta, k) \leq \frac{\displaystyle{n \choose k-\delta+1}_q}{\displaystyle{k \choose k-\delta+1}_q}.
\end{equation}

The iterative application of the Johnson type bound II (\cite[Theorem~3]{Xia2009Johnson},~\cite[Theorem~4,5]{Etzion2011Error}), which is a $q$-generalization of \cite{johnson1962new}, gives the upper bound
\begin{equation}
\label{ie_r_johnson}
A_q(n,d,k) \le
\left\lfloor \frac{q^{n}-1}{q^{k}-1} \left\lfloor \frac{q^{n-1}-1}{q^{k-1}-1} \left\lfloor \ldots
\left\lfloor \frac{q^{n'+1}-1}{q^{\frac{d}{2}+1}-1} A_q(n',d;\frac{d}{2}) \right\rfloor
\ldots \right\rfloor \right\rfloor \right\rfloor
\end{equation}
where $n' = n - k + \frac{d}{2}$.  It is attained with equality at $n=ak$ and   $d=2k$, i.e., and also at $n=13$, $k=3$,   $d=4$ with   $A_2(13,4;3)=1597245$, see~\cite{Braun2016Existence}.

While a lot of upper bounds for the maximum sizes of CDCs have been proposed in the literature, inequalities dominates most of them. \cite{Heinlein2017Classifying} determined the maximum size $A_2(8,6,4)$ of a binary subspace code to be 257. Another known improvement is $A_2(6,4,3) = 77.$
Some prior results about general lower bounds for $A_q(n,d,k)$ can see the references  \cite{silberstein2014subspace,Silberstein2013,Gluesing2017linkage,heinlein2017asymptotic}. Many CDC's from the multilevel construction based on echelon-Ferrers diagram have been given\cite{etzion2009error}.
Authors of \cite{TrautmannRosenthal2010} improved the echelon-Ferrers construction by a technology termed as  the pending dot.
The pending dot was extended to pending block in the paper \cite{silberstein2014subspace}.

The improved linkage construction is one of the most powerful methods in network coding, and it produce currently the best known lower bound about 69.1\% of the constant dimension code\cite{Gluesing2017linkage}.
The work that is closest to ours is \cite{XuChen}, which gives a construction of constant-dimension subspace
codes from two parallel versions of lifted MRD codes.
If $2t \ge n$, then   $A_q(2n, 2(n-t), n) \ge q^{n(t+1)}+\sum_{r=n-t}^n A_r(Q_q(n,n,t))$.

In this paper, we extend the above construction to a  general case $A_q(n+k,d,k), n\ge k \ge d$.

\section{Our construction}

The basic idea is as follows:  Define the set $W_1=\{rowspace[A,I_k]| A \in Q_q(n,k,\frac{d}{2})\}$ be a  $(n+k, q^{n(k-\frac{d}{2}+1)},d,k)_q$ CDC  in $F_q^{n+k}$ spanned by the rows of $k \times (n+k)$ matrices $(A,I_k)$ with $q^{n(k-\frac{d}{2}+1)}$ elements. Similar to code $W_1$, we have  another $k$ dimension subspaces in ${\bf F}_q^{n(k-\frac{d}{2}+1)}$ spanned by the rows of $W_2=(B,I_k)$, where $B$ is in the MRD code $Q_q(n,k,\frac{d}{2})$, can be used to increase the size of the constructed constant-dimension subspace codes.
Now the problem is how many different subspaces we can take from these two parallel versions of lifted MRD code such that
the subspace distance $d$ is preserved.  The key point here is to keep the subspace distances larger than or equal to $d$ by suitable conditions. By using this idea  a new lower bound for  $A_q(n+k,d, k), k\geq d$ are given with the help from the Delsarte Theorem about the rank distributions of the MRD code $Q_q(n,k,d)$.

\subsection{A New Lower Bound}
Our construction is

\begin{theorem} If $n\ge k\ge d$,  then
$$A_q(n+k,k,d)\ge q^{n(k-\frac{d}{2}+1)}+\sum_{r=\frac{d}{2}}^{k-\frac{d}{2}} A_r(Q_q(n,k,\frac{d}{2}))$$ \label{them-new-main-code}
\end{theorem}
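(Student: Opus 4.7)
The plan is to build an explicit $(n+k, M, d, k)_q$ constant-dimension code with $M$ matching the claimed lower bound, realized as the disjoint union of two parallel liftings of the same MRD code $Q := Q_q(n, k, d/2)$ of rank distance $d/2$. Define
\[
W_1 = \{ U_A := \mathrm{rowspace}[A \mid I_k] : A \in Q \}, \qquad W_2 = \{ V_B := \mathrm{rowspace}[I_k \mid B] : B \in Q,\ \mathrm{rk}(B) \le k - d/2 \}.
\]
Each element is a $k$-dimensional subspace of $F_q^{n+k}$. We have $|W_1| = q^{n(k-d/2+1)}$ by the MRD cardinality, and the standard row-reduction identity $d_S(\mathrm{rowspace}[M \mid I_k], \mathrm{rowspace}[M' \mid I_k]) = 2\,\mathrm{rk}(M - M')$ (and its mirror for the $W_2$ form) shows that the within-family distance in both $W_1$ and $W_2$ is at least $d$, using that $Q$ has rank distance $d/2$.

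The heart of the argument is the cross-distance bound $d_S(U_A, V_B) \ge d$ for every $A \in Q$ and every $B \in Q$ with $\mathrm{rk}(B) \le k - d/2$, equivalently $\dim(U_A \cap V_B) \le k - d/2$. I would parametrize $U_A \cap V_B$ by solving $\alpha[A \mid I_k] = \beta[I_k \mid B]$ for $\alpha, \beta \in F_q^k$. Using $n \ge k$, block-split $A = [A_L \mid A_R]$ with $A_L$ of size $k \times k$ and $B = [B_L \mid B_R]$ with $B_R$ of size $k \times k$. Matching coordinates in the three blocks (of lengths $k$, $n-k$, and $k$) eliminates $\beta = \alpha A_L$ and reduces the system to $\alpha N = 0$, where $N := [I_k \mid A_R] - A_L [B_R \mid B_L]$ is a $k \times n$ matrix. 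Since $\mathrm{rk}([I_k \mid A_R]) = k$ while $\mathrm{rk}(A_L[B_R \mid B_L]) \le \mathrm{rk}(B)$, the standard inequality $\mathrm{rk}(X - Y) \ge \mathrm{rk}(X) - \mathrm{rk}(Y)$ gives $\mathrm{rk}(N) \ge k - \mathrm{rk}(B) \ge d/2$. Hence $\dim(U_A \cap V_B) = k - \mathrm{rk}(N) \le k - d/2$, as needed.

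The same inequality yields $W_1 \cap W_2 = \emptyset$: if $U_A = V_B$ then $\dim(U_A \cap V_B) = k$, contradicting $\mathrm{rk}(B) \le k - d/2 < k$. I would then count $|W_2|$ via the Delsarte Theorem (Theorem~\ref{them-delsarte}): since $A_r(Q) = 0$ for $1 \le r < d/2$ by the MRD property, $|W_2| = \sum_{r=d/2}^{k-d/2} A_r(Q_q(n,k,d/2))$. Combining everything, the disjoint union $W_1 \cup W_2$ is a constant-dimension code with parameters $(n+k, M, d, k)_q$ of size $M = q^{n(k-d/2+1)} + \sum_{r=d/2}^{k-d/2} A_r(Q_q(n,k,d/2))$, which is the claimed bound.

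The main technical obstacle is the cross-distance verification: the identity blocks in the two families sit at misaligned positions (the split is at coordinate $k$ for $W_1$ but at coordinate $n$ for $W_2$), so the block-matrix algebra must be carried out carefully when $n > k$. The decisive feature is that the rank lower bound $\mathrm{rk}(N) \ge k - \mathrm{rk}(B)$ comes from the presence of the identity block inside $[I_k \mid A_R]$, and the rank restriction $\mathrm{rk}(B) \le k - d/2$ is exactly the threshold that converts this into the required $\dim(U_A \cap V_B) \le k - d/2$; it is also what ensures $W_1$ and $W_2$ remain disjoint.
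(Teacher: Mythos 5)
Your proposal is correct and follows essentially the same route as the paper: the same two-family construction (one full lifted MRD code, one lifted with the rank restriction $\mathrm{rk}(B)\le k-\frac{d}{2}$), with the cross-distance bound obtained from the same key inequality $\mathrm{rk}([I_k\mid \ast]-A_L[B_R\mid B_L])\ge k-\mathrm{rk}(B)\ge \frac{d}{2}$. The only cosmetic differences are that you place the identity blocks on the opposite sides and bound $\dim(U_A\cap V_B)$ by parametrizing the intersection, whereas the paper equivalently bounds $\dim(W_1+W_2)$ by block row reduction on the stacked $2k\times(n+k)$ matrix; your write-up is in fact cleaner about the block dimensions when $n>k$.
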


\begin{proof}
 Let $Q_1 \subset Q_q(n,k,\frac{d}{2})$ be a MRD code with rank distance $\frac{d}{2}$, and  let $Q_2 \subset Q_q(n,k,\frac{d}{2})$ be a MRD code with the rank of each element in $Q_2$ is at most $k-\frac{d}{2}$.  The code is defined by $${\bf C}=\{(I_k|Q_{12}):  Q_{12} \in Q_1\} \cup \{(Q_{21}|I_k): Q_{21} \in  \bf{Q_2}\}.$$ From the definition of lift MRD code, the subspace distances of the two codes $${\bf W}_1=\{(I_k|Q_{12}): Q_{12} \in Q_1\}$$ and $${\bf W}_2=\{(Q_{21}|I_k): Q_{22} \in Q_2\}$$ are at least $d$. We only need to prove that the subspace distance of $W_1 \in {\bf W}_1$ and $W_2 \in {\bf W}_2$ is at least $d$. It is clear that  these two codes are disjoint.\\

It is sufficient to prove that
$$dim(W_1+W_2)=  rank\left(
\begin{array}{cccc}
I_k &  Q_{12} \\
Q_{21}& I_k
\end{array} \right) \ge k+\frac{d}{2}$$

We split the $Q_{12}$ into two parts: $Q_{121}$ and $Q_{122}$, where $Q_{121}$ is a matrix with $(n-k)\times k$  and  $Q_{122}$ is a matrix with $k\times k$.
$$dim(W_1+W_2)=  rank\left(
\begin{array}{cccc}
I_k & Q_{121} & Q_{122}\\
Q_{21} &  &I_k
\end{array} \right) \ge k+\frac{d}{2}$$

The above formula can be transformed into the following  by subtracting second row multiplied by $Q_{122}$:

$$dim(W_1+W_2)=  rank\left(
\begin{array}{cccc}
I_k | Q_{121}- Q_{21}\times Q_{122}  & 0\\
Q_{21} &  &I_k
\end{array} \right) $$

It is clear that $dim(W_1+W_2)\ge k+rank(I_k | Q_{121}-Q_{21}\times Q_{122})$. We note that the rank of  $Q_{21}$ is at most $k-\frac{d}{2}$, then  $rank(I_k | Q_{121} -Q_{21}\times Q_{122})\ge k-(k-\frac{d}{2})\ge \frac{d}{2}$.
In the end, we have $dim(W_1+W_2)\ge k+\frac{d}{2}$. Here $(I_k|Q_{121})$ is a $k \times n$ matrix concatenated from $I_k$ and $Q_{121}$.

\end{proof}

\section{Parallel Construction from MRD Codes}

\subsection{General Construction}

Our construction is as follow. For simplify, the subset of a MRD code $Q_q(n,k,\frac{d}{2})$ with the rank at most $k-\frac{d}{2}$ is denoted by $SQ_q(n,k,\frac{d}{2})$.

\begin{theorem}\label{them-several-main}
If $n \ge k \ge d, s\ge 0$, then $A_q((s+1)\times k+n, d, k) \ge$  \\ $$\sum_{j=0}^{s}q^{((s-j)k+n)(k-\frac{d}{2}+1)}(\sum_{r=\frac{d}{2}}^{k-\frac{d}{2}}A_r(Q_q(k,k,\frac{d}{2})))^j +(\sum_{r=\frac{d}{2}}^{k-\frac{d}{2}}A_r(Q_q(n,k,\frac{d}{2})))\times  (\sum_{r=\frac{d}{2}}^{k-\frac{d}{2}} A_r(Q_q(k,k,\frac{d}{2})))^{s-1}.$$
\end{theorem}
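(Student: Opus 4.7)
The plan is to generalize the two-parallel construction of Theorem~\ref{them-new-main-code} to $s+1$ parallel copies of lifted MRD codes. I would partition the columns of each $k\times((s+1)k+n)$ generator matrix into $s+1$ consecutive blocks of width~$k$, labelled $0,1,\ldots,s$, followed by one block of width~$n$. For each pivot index $i\in\{0,1,\ldots,s\}$ I define a subfamily $\mathbf{W}_i$ whose generator matrices have the form $(A_0\mid A_1\mid\cdots\mid A_{i-1}\mid I_k\mid A_{i+1}\mid\cdots\mid A_s\mid A_n)$, with $A_\ell$ drawn from $Q_q(k,k,\frac{d}{2})$ for $\ell\neq i$ and $A_n$ drawn from $Q_q(n,k,\frac{d}{2})$, subject to the asymmetric rank rule that blocks strictly to the left of the identity (i.e.\ $A_0,\ldots,A_{i-1}$) must have rank at most $k-\frac{d}{2}$, while blocks strictly to the right may be arbitrary MRD codewords. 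The candidate code $\mathbf{C}$ is then $\bigcup_{i=0}^{s}\mathbf{W}_i$, augmented by one additional class responsible for the trailing term in the bound.

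For the distance verification I split into two cases. Within a single $\mathbf{W}_i$ two codewords share the identity block at position~$i$, so subtracting one generator from the other leaves a stacked $k$-row matrix whose pieces are differences of MRD codewords, and the minimum-rank-distance property $\frac{d}{2}$ of the MRD code forces the subspace distance to be at least~$d$. For two codewords drawn from distinct classes $\mathbf{W}_i$ and $\mathbf{W}_j$ with $i<j$, I reuse the Schur-complement computation from the proof of Theorem~\ref{them-new-main-code}: the stacked generator matrix contains the $2k\times 2k$ submatrix $\bigl(\begin{smallmatrix}I_k & A_j\\ B_i & I_k\end{smallmatrix}\bigr)$ at the pivot columns of positions $i$ and~$j$, whose rank equals $k+\mathrm{rank}(I_k-B_iA_j)$. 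Since $B_i$ lies strictly to the left of the identity in the $\mathbf{W}_j$ codeword, the rank rule forces $\mathrm{rank}(B_i)\le k-\frac{d}{2}$, so $\mathrm{rank}(B_iA_j)\le k-\frac{d}{2}$ and thus $\mathrm{rank}(I_k-B_iA_j)\ge\frac{d}{2}$, giving the required $\dim(W+W')\ge k+\frac{d}{2}$.

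The enumeration of $\mathbf{W}_i$ is a direct product count: the $i$ restricted $k\times k$ MRD blocks contribute $\sum_{r=d/2}^{k-d/2}A_r(Q_q(k,k,\frac{d}{2}))$ choices each by the Delsarte Theorem; the $s-i$ unrestricted $k\times k$ MRD blocks contribute $q^{k(k-d/2+1)}$ choices each; and the unrestricted $k\times n$ final block contributes $q^{n(k-d/2+1)}$ choices. Collecting the powers of~$q$ and summing over $i=0,\ldots,s$ (with $i$ renamed to $j$) produces the first sum in the claimed bound. The trailing term $\bigl(\sum_r A_r(Q_q(n,k,\frac{d}{2}))\bigr)\bigl(\sum_r A_r(Q_q(k,k,\frac{d}{2}))\bigr)^{s-1}$ I would obtain from an extra parallel class in which the $k\times n$ block itself plays a restricted role; the same Schur-complement reasoning ensures compatibility with every previously constructed codeword, and the Delsarte formula again supplies the enumeration factor for the $n$-block.

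The main obstacle I expect is the bookkeeping rather than any single inequality: one has to confirm that the asymmetric rank rule simultaneously enforces subspace distance at least~$d$ across all $\binom{s+1}{2}$ pairs of classes $\mathbf{W}_i,\mathbf{W}_j$, and to verify that the extra class contributing the trailing term integrates correctly with the main families. The internal MRD checks and the between-class Schur-complement argument are routine generalizations of the $s=0$ case handled in Theorem~\ref{them-new-main-code}, but one must still check that distinct classes yield distinct subspaces so that the total count simply adds.
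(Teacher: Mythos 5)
Your proposal matches the paper's proof in all essentials: the paper likewise builds $s+2$ parallel classes $B_1,\ldots,B_{s+2}$ with the identity block sliding rightward, restricts every block to the left of the identity to rank at most $k-\frac{d}{2}$, handles within-class distance by the MRD rank-distance property, handles between-class distance by the Schur-complement computation of Theorem~\ref{them-new-main-code}, and counts the restricted blocks via the Delsarte rank distribution. Your final class with the restricted $k\times n$ block is exactly the paper's $B_{s+2}$, so the approach is the same.
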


\begin{proof}  For the first block $B_1$, we take $k$ dimension subspaces in $F_q^{(s+1)k+n}$ spanned by rows of $(I_k, A_1^1,\ldots, A_s^1)$ where $A_1^1,\ldots,A_{s-1}^1$ are from the MRD code $Q_q(k,k,\frac{d}{2})$, $A_s^1$ takes from the MRD code $Q_q(n,k,\frac{d}{2})$. There are $q^{(sk+n)(k-\frac{d}{2}+1)}$ such subspaces.

For the second block $B_2$, we take $k$ dimension subspaces in $F_q^{(s+1)k+n}$ spanned by rows of $(A_1^2,I_k,\ldots, A_s^2)$ where $A_1^2$ takes from $SQ_q(k,k,\frac{d}{2})$. $A_2^2\ldots,A_{s-1}^2$ are from the MRD code $Q_q(k,k,\frac{d}{2})$ and $A_s^2 \in Q_q(n,k,\frac{d}{2})$. There are $q^{((s-1)k+n)}\times |SQ_q(n,k,\frac{d}{2})|=q^{((s-1)k+n)}\times (\sum_{r=\frac{d}{2}}^{k-\frac{d}{2}} A_r(Q_q(k,k,\frac{d}{2}))) $ such subspaces.

For the third block, we take $k$ dimension subspaces in ${F}_q^{(s+1)k+n}$ spanned by rows of $( A_1^3,A_2^3, I_k,\ldots, A_s^3)$ where $A_1^3,A_2^3$ takes from $SQ_q(k,k,\frac{d}{2})$.  $A_2^3\ldots,A_{s-1}^3$ are from the MRD code $Q_q(k,k,\frac{d}{2})$, and $A_s^3$ takes from the MRD code $Q_q(n,k,\frac{d}{2})$. There are $q^{((s-2)k+n)(k-\frac{d}{2})}\times (\sum_{i=\frac{d}{2}}^{k-\frac{d}{2}+1} A_i(Q_q(k,k,\frac{d}{2})))^2$ such subspaces.
We can continue this process. It is obvious that all these subspaces in ${\bf F}_q^{(s+1)k+n}$ are different.

These blocks can be demonstrated  as follows:

$$\left(
\begin{array}{lllllll}
B_1=(I_k& Q_q(k,k,\frac{d}{2})& Q_q(k,k,\frac{d}{2})& \cdots & Q_q(n,k,\frac{d}{2}))\\
B_2=(SQ_q(k,k,\frac{d}{2})&I_k& Q_q(k,k,\frac{d}{2})& \cdots & Q_q(n,k,\frac{d}{2}))\\
B_3=(SQ_q(k,k,\frac{d}{2})& SQ_q(k,k,\frac{d}{2})& I_k& \cdots & Q_q(n,k,\frac{d}{2}))\\
\cdots & \cdots &\cdots & \cdots &\cdots \\
B_{s+2}=(SQ_q(k,k,\frac{d}{2})& SQ_q(k,k,\frac{d}{2})& \cdots & SQ_q(n,k,\frac{d}{2})& \quad \quad \quad I_k)\\
\end{array} \right) $$

where all the lower triangles are taken from $SQ_q(k,k,\frac{d}{2})$ or $SQ_q(n,k,\frac{d}{2})$.

For any one fixed block position $j$ of $I_k$, the dimension of the intersection of two different subspaces is at most $t$ since $A_1^j,\ldots,A_s^j$ are in the MRD code $Q_q(k,k,\frac{d}{2})$ or $Q_q(n,k,\frac{d}{2})$. For different block positions $U_j, U_i, j>i$ in the set $\{1,\ldots,s+2\}$ of $I_k$, the dimension of the sum of two different subspaces is at least $k+\frac{d}{2}$ from theorem \ref{them-new-main-code}. We get the conclusion.\\
\end{proof}

\subsection{Examples}

  When $s=0$, this is the case, which happens to the theorem \ref{them-new-main-code}.  When $s=1$, in this case we refer to table \ref{tab:A-q-5-4} and the following result can be proved.

\begin{corollary}
If $n\ge k \ge d$,  we have $A_q(2k+n,d,k) \geq q^{(k+n)(k-\frac{d}{2}+1)}+q^{n(k-\frac{d}{2}+1)}\times (\sum_{r=\frac{d}{2}}^{k-\frac{d}{2}}A_r(Q_q(k,k,\frac{d}{2}))+(\sum_{i=\frac{d}{2}}^{k-\frac{d}{2}} A_r(Q_q(k,k,\frac{d}{2}))) \times (\sum_{i=\frac{d}{2}}^{k-\frac{d}{2}}A_r(Q_q(n,k,\frac{d}{2})) $.

\end{corollary}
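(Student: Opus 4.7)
The plan is to specialize Theorem~\ref{them-several-main} to $s=1$, working in the ambient space $F_q^{2k+n}$ where the construction produces $s+2 = 3$ blocks of $k$-dimensional subspaces indexed by the position of the identity matrix $I_k$. Concretely, I would write the three blocks as $B_1 = \{\mathrm{rowspace}(I_k \mid A_1 \mid A_2) : A_1 \in Q_q(k,k,d/2),\ A_2 \in Q_q(n,k,d/2)\}$, $B_2 = \{\mathrm{rowspace}(M_1 \mid I_k \mid A_2) : M_1 \in SQ_q(k,k,d/2),\ A_2 \in Q_q(n,k,d/2)\}$, and $B_3 = \{\mathrm{rowspace}(M_1 \mid M_2 \mid I_k) : M_1 \in SQ_q(k,k,d/2),\ M_2 \in SQ_q(n,k,d/2)\}$, where in $B_3$ the $n$-wide column has migrated to the middle so that $I_k$ occupies a $k$-wide slot.

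Next I would count cardinalities. Using $|Q_q(k,k,d/2)| = q^{k(k-d/2+1)}$ and $|Q_q(n,k,d/2)| = q^{n(k-d/2+1)}$, I obtain $|B_1| = q^{(k+n)(k-d/2+1)}$ and $|B_2| = q^{n(k-d/2+1)} \cdot \sum_{r=d/2}^{k-d/2} A_r(Q_q(k,k,d/2))$. For $B_3$, applying the rank-distribution identity $|SQ_q(m,k,d/2)| = \sum_{r=d/2}^{k-d/2} A_r(Q_q(m,k,d/2))$ in both slots gives $|B_3| = \sum_{r=d/2}^{k-d/2} A_r(Q_q(k,k,d/2)) \cdot \sum_{r=d/2}^{k-d/2} A_r(Q_q(n,k,d/2))$. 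The three cardinalities sum to exactly the right-hand side claimed in the corollary.

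Finally I would verify that the disjoint union is a CDC of subspace distance at least $d$. Within a single block the standard lifted-MRD property with rank distance $d/2$ delivers the bound. For two subspaces drawn from distinct blocks the identities sit in different column positions, so the row-reduction argument in the proof of Theorem~\ref{them-new-main-code} applies pairwise: after clearing one $I_k$ against the other, $\dim(W_i + W_j) \ge k + \mathrm{rank}(I_k \mid \Delta) \ge k + d/2$, because the governing off-diagonal entries come from some $SQ_q(\cdot,k,d/2)$ of rank at most $k-d/2$. The main bookkeeping point is that in $B_3$ one of the off-diagonal blocks is $k \times n$ rather than $k \times k$; but since the Theorem~\ref{them-new-main-code} argument only uses the rank bound on that block, and the rank bound is unchanged, the inter-block distance $\ge d$ goes through without modification, completing the proof.
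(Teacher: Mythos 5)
Your proposal is correct and is essentially the paper's own argument: the paper gives no separate proof of the corollary beyond invoking Theorem~\ref{them-several-main} at $s=1$, and your three blocks, their cardinalities, and the pairwise rank-based distance check are exactly that specialization (your explicit remark that the $k\times n$ off-diagonal block only enters through its rank bound $\le k-\frac{d}{2}$ is the right way to dispose of the column-width misalignment that the paper glosses over). One small point worth noting: plugging $s=1$ literally into the displayed formula of Theorem~\ref{them-several-main} would give $\bigl(\sum_r A_r(Q_q(k,k,\frac{d}{2}))\bigr)^{s-1}=1$ in the last term, so the corollary matches the construction rather than the theorem's printed exponent; your choice to re-derive the count from the blocks rather than substitute into the formula is what makes the argument come out right.
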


For example, when $n=5, k=5, d=4$, then $A_q(15,4,5)\ge q^{40}+(A_2(Q_q(5,5,2))+A_3(Q_q(5,5,2))\times q^{20}+(A_2(Q_q(5,5,2))+A_3(Q_q(5,5,2)))^2$.
Assume that $q=2$, then $A_2(15,4,5)\ge 2^{40}+(4805+124930)\times 2^{20}+(4805+124930)^2=1252379805361$,  exceeds the current known code $A_2(15,4,5)\ge 1235787711790$ which was constructed by the linkage.

When $n=6, k=5, d=4$, then $A_q(16,4,5)\ge q^{44}+(A_2(Q_q(5,5,2))+A_3(Q_q(5,5,2))\times q^{24}+(A_2(Q_q(5,5,2))+A_3(Q_q(5,5,2)))\times (A_2(Q_q(6,5,2))+A_3(Q_q(6,5,2)))$.
Assume that $q=2$, then $A_2(16,4,5)\ge 19843523036401$, while the current known code is 19772603404689.

When $n=7, k=5, d=4$, then $A_q(17,4,5)\ge q^{48}+(A_2(Q_q(5,5,2))+A_3(Q_q(5,5,2))\times q^{28}+(A_2(Q_q(5,5,2))+A_3(Q_q(5,5,2)))\times (A_2(Q_q(7,5,2))+A_3(Q_q(7,5,2)))$.
Assume that $q=2$, then $A_2(17,4,5)\ge 316614572112241$, while the current known code is 316361655057323.

When $n=8, k=5, d=4$, then $A_q(18,4,5)\ge q^{52}+(A_2(Q_q(5,5,2))+A_3(Q_q(5,5,2))\times q^{32}+(A_2(Q_q(5,5,2))+A_3(Q_q(5,5,2)))\times (A_2(Q_q(8,5,2))+A_3(Q_q(8,5,2)))$.
Assume that $q=2$, then $A_2(18,4,5)\ge 5062094281261681$, while the current known code is 5061786480788587.

When $n=6, k=6, d=4$, then $A_q(18,4,6)\ge q^{60}+(\sum_{r=2}^4 A_r(Q_q(6,6,2)))\times q^{30}+(\sum_{r=2}^4 A_r(Q_q(6,6,2)))^2$.
Assume that $q=2$, then $A_2(18,4,6)\ge 1321055665352277121$, while the current known code is 1301902384896972957.

When $n=7, k=6, d=4$, then $A_q(19,4,6)\ge q^{65}+(\sum_{r=2}^4 A_r(Q_q(6,6,2)))\times q^{35}+(\sum_{r=2}^4 A_r(Q_q(6,6,2)))\times (\sum_{r=2}^4 A_r(Q_q(7,6,2)))$.
Assume that $q=2$, then $A_2(19,4,6)\ge 41829335877977260673$, while the current known code is 41660876316712223851.

When $n=6, k=6, d=6$, then $A_q(18,6,6)\ge q^{48}+(A_3(Q_q(6,6,3)))\times q^{24}+(A_3(Q_q(6,6,3)))^2$.
Assume that $q=2$, then $A_2(18,6,6)\ge$ 1321055665352277121, while the current known code is 1301902384896972957.

When $n=7, k=6, d=6$, then $A_q(19,6,6)\ge q^{52}+(A_3(Q_q(6,6,3)))\times q^{28}+(A_3(Q_q(6,6,3)))\times (A_3(Q_q(7,6,3)))$.
Assume that $q=2$, then $A_2(19,6,6)\ge$ 41829335877977260673, while the current known code is 41660876316712223851.

\begin{longtable}{|l@{\extracolsep{\fill}}|l|l|l||}
 \caption{\label{tab:A-q-5-4}New subspace codes from parallel linkage}\\
\hline
$A_q(n,k,d)$　&　New 　&　Old　 \\ \hline  \hline \endfirsthead
\multicolumn{4}{r}{continued table} \\ \hline
$A_q(n,k,d)$ & New & Old \\  \endhead
$A_2(15,4,5)$ & 1252379805361   &1235787711790     \\ \hline
$A_3(15,4,5)$ & 12399152568347096641   &12394544365887696067   \\ \hline
$A_4(15,4,5)$ &\tabincell{l}{1215514411238392851780481}   &     \tabincell{l}{1215478900794081741379237}  \\ \hline
$A_5(15,4,5)$ &\tabincell{l}{9113715532351043940956916001}   &     \tabincell{l}{9113676963739967346201192181}   \\ \hline
$A_7(15,4,5)$ &\tabincell{l}{6369953433032789460601458266\\169601}   &        \tabincell{l}{6369951878418978850938882154\\998943}  \\ \hline
$A_8(15,4,5)$ &\tabincell{l}{1329603936275508669606118276\\013276161}   &     \tabincell{l}{1329603830010446369320349184\\800629897}  \\ \hline
$A_9(15,4,5)$ &\tabincell{l}{1478344516592412787455586580\\29146634561}   &   \tabincell{l}{1478344472192502033634129606\\95716746417}   \\ \hline

$A_2(16,4,5)$ & 19843523036401   &19772603404689     \\ \hline
$A_3(16,4,5)$ & 1004000504591772043921   &1003958093636913086356   \\ \hline
$A_4(16,4,5)$ &\tabincell{l}{311163172623815098594853761}   &     \tabincell{l}{311162598603284926601722789}  \\ \hline
$A_5(16,4,5)$ &\tabincell{l}{569604907196437900170969666\\6001}   &             \tabincell{l}{569604810233747959140019927\\4056}   \\ \hline
$A_7(16,4,5)$ &\tabincell{l}{152942545364298885085407927\\47953212001}   &     \tabincell{l}{152942544600839682211042601\\25199891012}  \\ \hline
$A_8(16,4,5)$ &\tabincell{l}{544605729453422094767666867\\6810298670081}   &    \tabincell{l}{544605728772278832873615029\\1737261978761}  \\ \hline
$A_9(16,4,5)$ &\tabincell{l}{969941808565538232088309356\\467880730794721}   &  \tabincell{l}{969941808205500584267352435\\307639908204424}   \\ \hline

$A_2(17,4,5)$ & 316614572112241   &316361655057323     \\ \hline
$A_3(17,4,5)$ & 81320951591684518802401   &81320605584592333256896   \\ \hline
$A_4(17,4,5)$ &\tabincell{l}{796576337880209517283482182\\41}   &                    \tabincell{l}{796576252424409420394019074\\93}  \\ \hline
$A_5(17,4,5)$ &\tabincell{l}{356003008694948362952505652\\0416001}   &               \tabincell{l}{356003006396092474470158189\\5765556}   \\ \hline
$A_7(17,4,5)$ &\tabincell{l}{367215049622878118441926070\\65053113534401}   &       \tabincell{l}{367215049586616076988713969\\88494253570488}  \\ \hline
$A_8(17,4,5)$ &\tabincell{l}{223070506509377829926955265\\58406437716147201}   &     \tabincell{l}{223070506505125409945032726\\04052070322817161}  \\ \hline
$A_9(17,4,5)$ &\tabincell{l}{636378820366486886124356752\\3072271636040343681}   &   \tabincell{l}{636378820363628933337809933\\8862136580032063628}   \\ \hline

$A_2(18,4,5)$ & 5062094281261681   &5061786480788587     \\ \hline
$A_3(18,4,5)$ & 6586968939449073017068081   &6586969052351977742082856   \\ \hline
$A_4(18,4,5)$ &\tabincell{l}{203923520267181222664011458\\32321}   &                  \tabincell{l}{203923520620648811617657149\\36261}  \\ \hline
$A_5(18,4,5)$ &\tabincell{l}{222501878974454212088292624\\3764166001}   &             \tabincell{l}{222501878997557796543846638\\9564044756}   \\ \hline
$A_7(18,4,5)$ &\tabincell{l}{881683334056450410727244356\\60598758561216801}   &     \tabincell{l}{881683334057465200849902241\\56399442519707072}  \\ \hline
$A_8(18,4,5)$ &\tabincell{l}{913696794644823698671485722\\17565157707505515521}&    \tabincell{l}{913696794644993679134854045\\86035285169051407881}  \\ \hline
$A_9(18,4,5)$ &\tabincell{l}{417528144040561519972171750\\13157922612937898347041}   &\tabincell{l}{417528144040576943162937097\\62272976069664226251756}   \\ \hline

$A_2(18,4,6)$ &  1321055665352277121 & 1301902384896972957     \\
\hline
$A_3(18,4,6)$   &  43241984454039791949376848001 &  43225562953761729683056546744     \\
\hline
$A_4(18,4,6)$  & \tabincell{l}{13364977346615645679038498701\\19608321} &                       \tabincell{l}{13364584050324721907495003191\\15666769}    \\
\hline
$A_5(18,4,6)$  & \tabincell{l}{86915431345555286301049529274\\6726010500001} &                  \tabincell{l}{86915062398553386450111346455\\8715816063570 }    \\
\hline
$A_7(18,4,6)$  & \tabincell{l}{50827312139771315191417379894\\7508628845999547723521} &         \tabincell{l}{50827299725042503817812207998\\9337055565420133852250}    \\
\hline
$A_8(18,4,6)$  & \tabincell{l}{15329290735337201203431548481\\57539946320174365857546241} &     \tabincell{l}{15329289509595962385976540015\\68049806785911717931336256}    \\
\hline
$A_9(18,4,6)$  & \tabincell{l}{17973218529883895304078740000\\31880315113074804045244546241} &  \tabincell{l}{17973217989922197607083648785\\65965820546280222286535998208}    \\
\hline

$A_2(19,4,6)$ & 41829335877977260673  & 41660876316712223851    \\ \hline
$A_3(19,4,6)$ & 10504270152299418377046931486963  &                            10503811797764100313173626438410  \\ \hline
$A_4(19,4,6)$ &\tabincell{l}{13685359611640410968142696785811\\89910529}   &           \tabincell{l}{1368533406753251523327488538756\\265820613}  \\ \hline
$A_5(19,4,6)$ &\tabincell{l}{271609616470342511721762551559160\\7311080812501}   &   \tabincell{l}{2716095699954793272557435557305\\913288978107256}   \\ \hline
$A_7(19,4,6)$ &\tabincell{l}{854254430749233338554217116630539\\7881306156458444082343}   &  \tabincell{l}{8542544264787894328644239651424\\668612867543534298440406}  \\ \hline
$A_8(19,4,6)$ &\tabincell{l}{502310159279228668519841248394562\\60102624467996257072709633}  &   \tabincell{l}{5023101586504404954636792632338\\1856072893849422409772176905}  \\ \hline
$A_9(19,4,6)$ &\tabincell{l}{106130054948108187145939004193505\\490502763166947652518585333609}&     \tabincell{l}{1061300549086915846347321188422\\58778340200008478852505231237828\\}   \\ \hline

$A_2(18,6,6)$  & 282957166112041   &   282206169223861  \\
\hline
$A_3(18,6,6)$  & 79773409708059646924801 &   7977052899429695519499   \\
\hline
$A_4(18,6,6)$  &  79228596837171602219181433561 & 79228465213535437618551984193    \\
\hline
$A_5(18,6,6)$ & \tabincell{l}{355271606149055831666451347994\\5761} &              \tabincell{l}{355271549860537797212597654883\\4375} \\
\hline
$A_7(18,6,6)$ & \tabincell{l}{367033693031672327723398953389\\21195414401} &       \tabincell{l}{367033691269048237620480816438\\30813838569}    \\
\hline
$A_8(18,6,6)$ & \tabincell{l}{223007453917574044765606725592\\19358376203601}&     \tabincell{l}{223007453646901902254328287720\\81255730905601} \\
\hline
$A_9(18,6,6)$ & \tabincell{l}{636268545986545104493692756885\\8327487086310721} &  \tabincell{l}{636268545755947049961803989258\\2186036406787787}    \\
\hline

$A_2(19,6,6)$ & 4527206647567081    & 4515298730748862   \\ \hline
$A_3(19,6,6)$ & 6461646138903634303206481    &  6461417369472937542117973  \\ \hline
$A_4(19,6,6)$ & \tabincell{l}{202825207897159451521448163\\21241}   &               \tabincell{l}{20282487415579548140041494\\597697}  \\ \hline
$A_5(19,6,6)$ & \tabincell{l}{222044753843060819393754856\\8627695761}   &          \tabincell{l}{22204471885174522176980972\\29003922001}   \\ \hline
$A_7(19,6,6)$ & \tabincell{l}{881247896969044489775331238\\06250729756056801}   &   \tabincell{l}{88124789274625593704300569\\118173789808207533}  \\ \hline
$A_8(19,6,6)$ & \tabincell{l}{913438531246383218917381666\\27038002732908483921}  & \tabincell{l}{91343853013939625003475366\\792854319199699599873}  \\ \hline
$A_9(19,6,6)$ & \tabincell{l}{417455793021772239445991928\\87121943049164300832481}&\tabincell{l}{41745579287064298367037383\\503578317354686374220297}   \\ \hline
\end{longtable}

\section{Conclusion}

In the paper  a parallel construction from MRD codes are given, which is adopted from the existing work \cite{XuChen}. A new lower bounds on $A_q(n,d,k)$ can be proved.
In essence, our result is a generalization result of the paper \cite{XuChen} when $n$ is equal to $k$.
In addition, this  method is generalized to several parallel versions of maximum rank distance codes, and this method outperform the linkage construction in some cases.
These new codes are listed in Table \ref{tab:A-q-5-4}.

\bibliographystyle{IEEEtran}
\bibliography{subspacecode}

\end{document}